\newcommand{\hide}[1]{}
\newcommand{\bra}[1]{\langle #1 |}
\newcommand{\ket}[1]{| #1 \rangle}
\newcommand{\braket}[2]{\langle #1 | #2 \rangle}
\def \id   {\mathbbm{1}}
\newtheorem{theorem}{Theorem} 
\newtheorem{lemma}[theorem]{Lemma}
\begin{document}

\title{Preparing projected entangled pair states on a quantum computer}

\author{Martin Schwarz}
\author{Kristan Temme}
\author{Frank Verstraete}
\affiliation{Vienna Center for Quantum Science and Technology, Faculty of Physics,
University of Vienna, 
Vienna, Austria}

\begin{abstract}
We present a quantum algorithm to prepare injective PEPS on a quantum
computer, a class of open tensor networks representing quantum states.
The run-time of our algorithm scales polynomially with the inverse of
the minimum condition number of the PEPS projectors and, essentially, 
with the inverse of the spectral gap of the PEPS' parent Hamiltonian. 
\end{abstract}

\maketitle


  Projected Entangled Pair States, or PEPS \cite{VC04}, have been proposed as a class of
  quantum states especially suited to describe the ground states of local
  Hamiltonians in quantum many-body physics. PEPS are a higher-dimensional
  generalization of the one-dimensional Matrix Product States \cite{RO97}, or MPS, for
  which many interesting properties have been proven: For example, MPS provably
  approximate the ground state of 1D local Hamiltonians with constant spectral
  gap \cite{VC06,hastings07}, exhibit an area law \cite{hastings07} as well as
  an exponential decay of two-point correlation functions.
  Furthermore, for each MPS with
  the \emph{injectivity} property \cite{PVCW07},
  a parent Hamiltonian can be constructed with this MPS as its unique ground state.
  MPS can also be prepared efficiently on a quantum computer \cite{SSVCW05}.
  PEPS however form a much richer class of states, and can e.g. represent critical systems and systems with topological quantum order \cite{VWPC06}. It is conjectured that all ground states of gapped local Hamiltonians in higher dimensions can be represented faithfully as PEPS, and although there are strong indications for this fact, this has not been proven. What is clear, however, is the fact that one can also construct parent Hamiltonians for them \cite{PVCW07}, and the PEPS will be the unique ground states of those Hamiltonians if the PEPS obeys the so-called injectivity condition \cite{PVCW07}. Many physically relevant classes of PEPS on lattices are known to be almost always injective, including e.g. the 2D AKLT state \cite{PVCW07}. 
  A particularly interesting subclass of PEPS is the one that consists of all those states whose parent Hamiltonian have a gap that scales at most as an inverse polynomial as a function of the system size: in that case, a local observable (i.e. the local Hamiltonian) allows to distinguish the state from all other ones, as the ground state always has energy zero by construction. It was an open problem \cite{VWPC06} whether such states could however be even created on a quantum computer, as an algorithm that would allow to prepare any PEPS would allow for the solution of $PP$-complete problems~\cite{SWVC07}.

  In this article we show how well-conditioned injective PEPS can be prepared on a
  quantum computer efficiently. The key idea of our approach is to grow the
  PEPS step by step. We demand that not only our final PEPS is the unique ground
  state of its parent local Hamiltonian, but also that there exists a sequence of
  partial sums of the local terms of the parent Hamiltonian, such that each partial
  sum has a unique ground state of its own. Based on this assumption, the algorithm
  starts with a physical realization of the valence bond pairs as its initial state
  and iteratively performs entangling measurements on the virtual particles to map
  virtual degrees of freedom to physical ones, just as in the definition of
  the PEPS. The PEPS is called injective, iff this map is (left) invertible which
  can only be the case if the dimension of the physical space is actually at least
  as large as the dimension of the virtual space at each vertex.
  Preparing a PEPS by measurements may seem to require post-selection to
  project onto the right measurement outcome. To overcome this issue
  we use the Marriott-Watrous trick \cite{MW05,NWZ09} of undoing a measurement based
  on Jordan's lemma \cite{jordan1875} and combine it with the uniqueness property of
  injective PEPS \cite{PVCW07} to prepare the required eigenstates. 
  A key element that contributes to the success of this algorithm is the fact that the measurements are not done locally, such as in the framework of dissipative quantum state engineering \cite{VWC09}, but globally by running a phase estimation algorithm that singles out the ground subspace; a similar approach was used in the context of the quantum Metropolis sampling algorithm~\cite{TOVPV10}. 
  Alternatively, methods for eigenpath traversal \cite{ATS07,BKS10} can also be applied~\cite{BoixoEigen}.

  \hide{
  The paper is structured as follows. 
  Section \emph{Definitions and Results} reviews essential definitions and states the main theorem.
  Section \emph{Algorithm} presents the algorithm, while
  sections \emph{Bounding the transition probabilities} and \emph{Bounding the convergence rate} provide
  detailed analyses and finally proof the main theorem.
  }

  {\it Definitions and Results.}---Before stating the result, 
  we review the definition of PEPS and their essential properties.
  Recall \cite{VC04,PVCW07} that PEPS are quantum states defined over an arbitrary graph $G=(V,E)$
  such that quantum systems of local dimension $d$ are assigned to each vertex.
  We construct the PEPS by assigning to each edge $e \in E$ a maximally entangled
  state $\sum_{i=1}^D \ket{ii}$. In this way, a vertex $v \in V$ with degree $k$
  gets associated with $k$ \emph{virtual} $D$-dimensional systems. Finally, a map $A^{(v)}: \mathbbm{C}^D \otimes \mathbbm{C}^D \otimes \cdots \mathbbm{C}^D \mapsto \mathbbm{C}^d$
  is applied to each vertex, taking the $k$ \emph{virtual} $D$-dimensional systems
  to a single \emph{physical} $d$-dimensional system. The linear map $A^{(v)}$ is usually called
  the PEPS ``projector'' and is parameterized by tensors $A^{(v)}_i$ as follows:
  $A^{(v)} = \sum_{i=1}^d \sum_{j_1,\dots,j_{k}=1}^D A^{(v)}_{i,j} \ket{i}\bra{j_1,\dots,j_{k}}$
  where $A^{(v)}_i$ is a tensor with $k$ indices. The PEPS can now be written as
  $  \ket{\psi} = \sum_{i_1,\dots,i_n=1}^d \mathcal{C}[\{A^{(v)}_{i_v}\}_v] \ket{i_1,\dots,i_n}$
  where $\mathcal{C}$ means the contraction of all tensors $A_i^{(v)}$ according to
  the edges of the graph. In the most general case the virtual index dimension $D$ as
  well as the physical index dimension $d$ may also depend on the edges $e$ and vertices $v$
  of the interaction graph, but we suppress this detail in favor of simplicity.
  Note, that w.l.o.g. $A^{(v)} \geq 0$ may be assumed, since for
  arbitrary $\tilde{A}^{(v)}$ we can choose a local basis by performing
  a polar decomposition, i.e.
  $\tilde{A}^{(v)} = U^{(v)} A^{(v)}$
  with $U^{(v)}$ unitary and $A^{(v)} \geq 0$.

  A PEPS $\ket{\psi}$ is called \emph{injective} \cite{PVCW07},
  if each PEPS projector $A^{(v)}$ has a left inverse.
  For some PEPS this may only be true, after some local
  contractions of a constant number of PEPS tensors $A^{(v)}$ according
  to the interaction graph of the PEPS forming new projectors $\hat{A}^{(v)}$
  for which the condition above holds.
  Since this blocking can be performed efficiently for constant
  degree graphs, we may assume for the remainder of this paper, that
  it has already been performed, such that each individual $A^{(v)}$
  in our input is already injective by itself. Note, that the existence of a left
  inverse allows us to strengthen the assumption $A^{(v)}\geq 0$
  w.l.o.g. to $A^{(v)} > 0$ for all $v$.

  For injective PEPS, there is a simple construction \cite{PVCW07}
  of a $2$-local parent Hamiltonian, such that the injective PEPS is its
  unique, zero-energy ground state. This construction gives a parent
  Hamiltonian for a quantum system consisting of $n$ particles with
  $d$-dimensional Hilbert spaces.
  
  Let $H$ be a Hermitian matrix with $\lambda_0 < \lambda_1$ its smallest and second
  smallest eigenvalues. Then we call $\Delta(H)=\lambda_1 - \lambda_0$ the \emph{spectral gap} of $H$.
  For any matrix $A$, the \emph{condition number} $\kappa(A)$ is defined as
  $\kappa(A)=\frac{\sigma_{\max}(A)}{\sigma_{\min}(A)}$,
  where $\sigma_{\max}(A)$ and $\sigma_{\min}(A)$ are the largest and
  smallest singular values of $A$, respectively.
  We are now in a position to state the performance of our algorithm as our main theorem:
  \begin{theorem} \label{thm:main}
  Let $G=(V,E)$ be an interaction graph with bounded degree and some total order
  defined on $V$. Let $\{A^{(v)}\}_{v \in V_{[t]}}$ be a set of injective PEPS
  projectors of dimension $d \times D^k$ associated with each $v$ in $V$ up to vertex $t$ (according to the
  total vertex order) describing a sequence of PEPS $\ket{\psi_t}$, and let $\kappa=\underset{v \in V}{\max}\;\kappa(A^{(v)})$ be
  the largest condition number of all PEPS projectors.  Let $\Delta=\min_t \Delta(H_t)$,
  where $\Delta(H_t)$ is the spectral gap of the parent Hamiltonian $H_t$ of the PEPS $\ket{\psi_t}$.
  Then there exists a quantum algorithm generating the final PEPS $\ket{\psi_{|V|}}$
  with probability at least $1-\varepsilon$ in time 
  ~$\tilde{O}( \frac{|V|^2 |E|^{2}\kappa^2}{ \varepsilon \Delta} +|V|kd^6)$.
  \end{theorem}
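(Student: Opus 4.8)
\emph{Proof idea.}---The plan is to build $\ket{\psi_{|V|}}$ by growing it one vertex at a time in the given total order, mimicking the defining contraction of the PEPS. Start from the physical state $\bigotimes_{e\in E}\big(\sum_{i=1}^{D}\ket{ii}\big)$ on the $2|E|$ virtual $D$-dimensional registers; this is $\ket{\psi_0}$ together with all valence bonds. Assume inductively that after step $t-1$ we hold $\ket{\psi_{t-1}}$ with its still-uncontracted virtual legs attached. Rescaling $A^{(t)}$ so that $\sigma_{\max}(A^{(t)})=1$ (which does not change $\ket{\psi_t}$), the operator $A^{(t)\dagger}A^{(t)}$ obeys $0<A^{(t)\dagger}A^{(t)}\le\id$ on vertex $t$'s virtual space, so together with a fresh $d$-dimensional physical register and an ancilla it defines a two-outcome measurement, with unitary dilation $V_t$, whose ``accept'' branch applies $A^{(t)}$ and hence maps $\ket{\psi_{t-1}}$ to a state proportional to $\ket{\psi_t}$ once we also attach fresh valence bonds for the edges leaving $t$. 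The accept probability $p_t$ may be exponentially small, so literal post-selection is not an option.

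To avoid post-selection I would exploit injectivity: by the parent-Hamiltonian construction, $\ket{\psi_t}$ is the \emph{unique} ground state of the $2$-local Hamiltonian $H_t$, with energy $0$ and gap $\Delta(H_t)\ge\Delta$, and similarly $\ket{\psi_{t-1}}$ for $H_{t-1}$. Hence phase estimation on $H_t$ run to precision $\Theta(\Delta)$ implements, up to a controllable error, the projective measurement of the rank-one projector $\ket{\psi_t}\!\bra{\psi_t}$ (tensored with identity on the untouched registers), and likewise for $\ket{\psi_{t-1}}$. In step $t$ I then have two projective measurements: the ``accept'' projector $\Pi^{(t)}_{\mathrm{acc}}$ of the $A^{(t)}$-measurement, and the projector onto the current input $\ket{\psi_{t-1}}\otimes(\text{fresh VB})\otimes\ket{0}_{\mathrm{anc}}$. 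By Jordan's lemma these share a common decomposition into invariant subspaces of dimension at most two, with a single nontrivial block on which the angle $\theta$ satisfies $\cos^2\theta=p_t$. Starting in the input state and alternating the two measurements (the Marriott-Watrous trick, undoing $V_t$ after a rejection) performs a random walk on that block which reaches ``$\Pi^{(t)}_{\mathrm{acc}}$ accepts'' --- where the post-measurement state is exactly $\ket{\psi_t}$ --- in $\tilde{O}(p_t^{-1})$ rounds with constant probability; boosting the success probability of the step above $1-\varepsilon/|V|$ costs a further $\mathrm{poly}(|V|/\varepsilon)$ factor.

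It remains to make everything quantitative, which has two pieces. First, the transition probabilities must be lower-bounded purely in terms of $\kappa=\max_v\kappa(A^{(v)})$: with $\sigma_{\max}(A^{(t)})=1$ one has $A^{(t)\dagger}A^{(t)}\ge\sigma_{\min}(A^{(t)})^2\id$, so $p_t=\mathrm{Tr}\!\big(A^{(t)\dagger}A^{(t)}\rho_t\big)\ge\sigma_{\min}(A^{(t)})^2=\kappa(A^{(t)})^{-2}\ge\kappa^{-2}$, where $\rho_t$ is the reduced state of vertex $t$'s virtual legs; the subtle point is to verify that this bound survives when $\Pi^{(t)}_{\mathrm{acc}}$ and the target projector are replaced by the slightly perturbed operators coming from finite-precision phase estimation. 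Second, one must account for cost: each amplification round is dominated by a phase estimation of $H_t$, a sum of $O(|E|)$ bounded $2$-local terms with $\|H_t\|=O(|E|)$, simulated for evolution time $\Theta(1/\Delta)$, which costs $\tilde{O}(|E|^2/\Delta)$; there are $\tilde{O}(\kappa^2)$ rounds per step and $|V|$ steps; and since the error of each phase estimation decays only inverse-polynomially in its length, keeping the accumulated error (over all steps and rounds) below $\varepsilon$ against a target with gap only $\Delta$ forces an extra factor of order $|V|/\varepsilon$. Collecting these gives the quantum running time $\tilde{O}\!\big(|V|^2|E|^2\kappa^2/(\varepsilon\Delta)\big)$. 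The additive $\tilde{O}(|V|kd^6)$ is the one-time classical preprocessing: for each of the $O(|V|k)$ vertices/legs, forming $A^{(v)\dagger}A^{(v)}$, its square root and left inverse, and the associated $2$-local parent-Hamiltonian terms, i.e.\ linear algebra on $O(d^2)$-dimensional matrices.

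I expect the main obstacle to be exactly this quantitative step: the uniqueness of the injective PEPS is only quantitatively useful through the gap $\Delta$, so the phase estimations never project \emph{exactly} onto $\ket{\psi_t}$, and one has to control (e.g.\ by perturbing the Jordan angles against $\Delta$) both that the transition-probability bound $p_t\ge\kappa^{-2}$ is not spoiled and that the small residual errors do not compound over the $|V|$ steps --- this is what ultimately pins down the $\kappa^2$ and the $1/(\varepsilon\Delta)$ scaling.
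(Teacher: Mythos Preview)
Your proposal is correct and follows the same high-level strategy as the paper --- grow the PEPS one vertex at a time, use the Marriott--Watrous alternation based on Jordan's lemma to avoid post-selection, appeal to injectivity so that each intermediate $\ket{\psi_t}$ is the unique ground state of a gapped $H_t$, and implement the needed rank-one projectors by phase estimation at cost $\tilde O(|E|^2/\Delta)$. The bookkeeping that turns the per-step failure bound into the global $1-\varepsilon$ guarantee and the final $\tilde O(|V|^2|E|^2\kappa^2/(\varepsilon\Delta)+|V|kd^6)$ runtime matches the paper's Lemma~3 and the proof of the theorem essentially step for step.

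There is one genuine structural difference. The paper takes \emph{both} Marriott--Watrous projectors to be rank-one ground-state projectors, implemented by phase estimation on $H_t$ and on $H_{t+1}$; the relevant Jordan angle is then the overlap $|\braket{\psi_{t+1}}{\psi_t}|^2$, and the bound $\ge\kappa^{-2}$ is obtained (their Lemma~2) by a short calculation with the normalisation constants $Z_t,Z_{t+1}$ using $A^{(t+1)}\ge 0$. You instead pair the \emph{local} acceptance projector $\Pi_{\mathrm{acc}}^{(t)}$ of the dilated $A^{(t)}$-measurement with the rank-one projector onto the input $\ket{\psi_{t-1}}$ (the latter via phase estimation on $H_{t-1}$). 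Because your input projector is rank one there is still a single nontrivial Jordan block, and its angle is $p_t=\bra{\psi_{t-1}}A^{(t)\dagger}A^{(t)}\ket{\psi_{t-1}}\ge\sigma_{\min}^2=\kappa^{-2}$ by a one-line operator inequality. Your variant makes the transition-probability bound more elementary and replaces one of the two phase estimations per alternation by a cheap local circuit; the paper's variant keeps the two projectors symmetric and rank one, so one never has to check that acceptance under a higher-rank $\Pi_{\mathrm{acc}}^{(t)}$ still lands exactly on $\ket{\psi_t}$ (it does, since the block is two-dimensional and $\Pi_{\mathrm{acc}}^{(t)}\ket{\psi_{t-1}}\propto\ket{\psi_t}$, but the argument is needed). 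Either choice yields the stated runtime, and both share the caveat you correctly flag: that the finite-precision phase-estimation errors must be absorbed into the $\tilde O$ and into the $1/(\varepsilon\Delta)$ factor.
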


  \begin{figure}[h]
  \raggedright
  \noindent\textbf{Input:} Interaction graph $G=(V,E)$ with degree bound $k$
  and total vertex order. For each $v \in V$, $d \times D^k$-matrices $A^{(v)}$ as PEPS projectors.
  Acceptable probability of failure $\varepsilon$.\\
  \noindent\textbf{Output:} PEPS $\ket{\psi}$ with probability at least $1-\varepsilon$.
  \begin{enumerate}
  \item $t \leftarrow 0$  	
  \item $\ket{\psi_t} \leftarrow$ entangled pair for each edge $e \in E$. \label{step:pairs}
  \item $H_t = \sum_{e \in E} H_e$ \label{step:pairterms}
  \item For each $v \in V$ according to total order:\label{step:vertexloop}
  \begin{enumerate}
  	\item $H_{t+1} \leftarrow H_{t}$ \label{step:cpH}
	\item For each neighbor $v' \in V$ of $v$:
	\begin{itemize}
		\item remove term $H_t^{(v,v')}$ from $H_{t+1}$ \label{step:rmprojprojectors}	  	
		\item compute parent Hamiltonian term $H_{t+1}^{(v,v')}$ using $A^{(v)}$
		\item add term $H_{t+1}^{(v,v')}$ to $H_{t+1}$ 
	\end{itemize}
    	\item Add $H_{phy}^{(v)}=c(\id-P_{phy}^{(v)})$ to $H_{t+1}$\label{step:addprojector}
  	\item $\ket{\psi_{t+1}^{(\perp)}} \leftarrow$ measure $H_{t+1}$ on $\ket{\psi_t}$ \label{step:measurenext}
  	\item While measured energy nonzero: \label{step:failedloop}
	\begin{enumerate}
  		\item $\ket{\psi_{t}^{(\perp)}} \leftarrow$ measure $H_{t}$ on $\ket{\psi_{t+1}^{(\perp)}}$\label{step:measureundo}
  		\item $\ket{\psi_{t+1}^{(\perp)}} \leftarrow$ measure $H_{t+1}$ on $\ket{\psi_t^{(\perp)}}$\label{step:measureredo}
	\end{enumerate}
  	\item $t \leftarrow t+1$
  \end{enumerate}
  \end{enumerate}
  \caption{Algorithm constructing injective PEPS}
  \label{fig:alg}
  \end{figure}

  {\it Algorithm.---}Conceptually, PEPS are constructed by first preparing
  entangled pair states $\ket{\psi}=\sum_i \ket{ii}$ for each edge of the
  interaction graph describing the PEPS, and then projecting the $k$ \emph{virtual}
  indices associated with each vertex to a single \emph{physical} index.
  While this construction is usually considered only a theoretical device,
  the proposed algorithm is indeed simulating the above construction for the
  case of injective PEPS with gapped Hamiltonians. This entails making the
  virtual indices physical as well.

  Figure \ref{fig:alg} presents our algorithm in pseudo-code. We proceed by explaining
  each step in detail.
  PEPS construction starts in step \ref{step:pairs} by distributing
  maximally entangled states of the desired bond dimension according
  to the interaction graph $G=(E,V)$. The resulting system is the zero-energy
  ground state of a simple Hamiltonian $H_0$ consisting purely of terms projecting
  onto $H_e=\id - \frac{1}{d}\sum_{i,j=1}^{d} \ket{ii}\bra{jj}$
  for each edge of the interaction graph (step \ref{step:pairterms}).
  Note, that this simple Hamiltonian is gapped.
  
  \begin{figure*}[t]
    \begin{center}
      \leavevmode
      \includegraphics[width=0.8\textwidth]{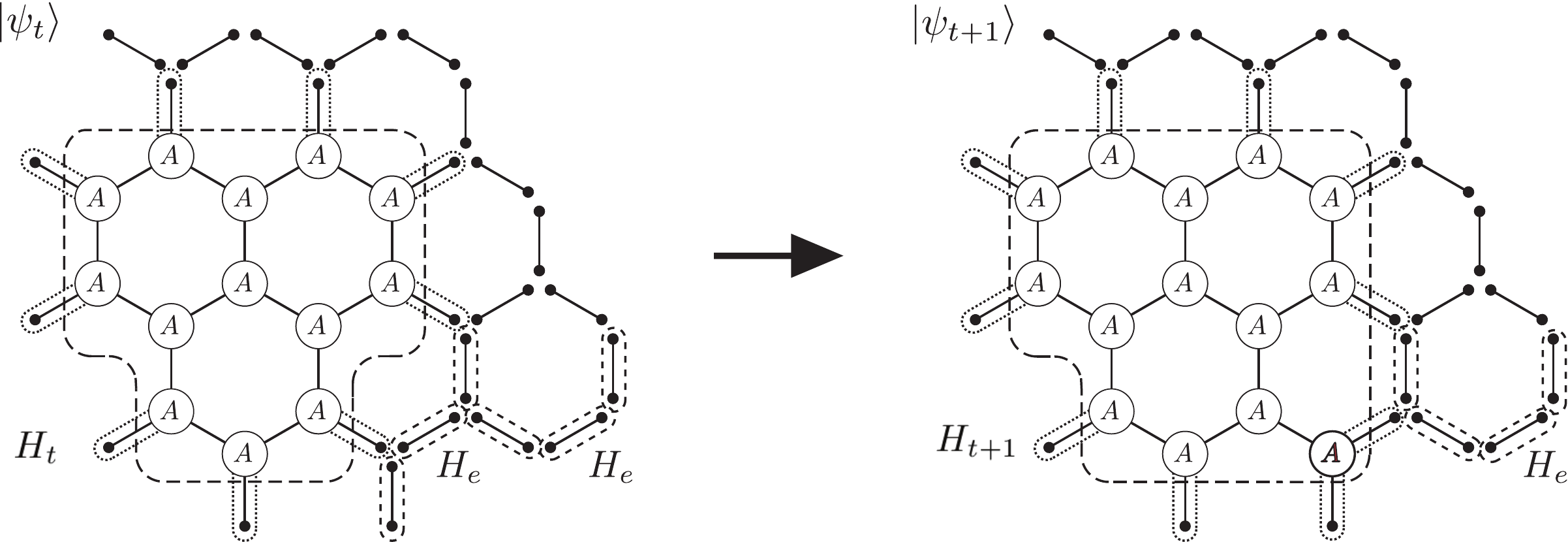}
    \end{center}
    \caption{In each step, the algorithm processes one vertex $v$: 
    $H_{t}$ is grown into $H_{t+1}$ by removing all existing terms 
       referring to~$v$, before the $2k$-local parent Hamiltonian terms are added implementing PEPS projector $A^{(v)}$ at $v$. Note, that terms around $v$ connecting to an open ``virtual'' index $v'$
       (bonds with dotted border) are only temporary and are removed in later steps of the algorithm.
       All terms constraining a vertex $v$ only restrict the physical subspace $P_{phy}^{(v)}$, while degrees of
       freedom in the orthogonal subspace $(\id-P_{phy}^{(v)})$ are eliminated with an additional penalty term $H_{phy}^{(v)}$ that is added per vertex.
       }
    \label{fig:awesome_image}
  \end{figure*}

  We now describe the main iteration of the algorithm (step \ref{step:vertexloop}),
  which is illustrated in figure \ref{fig:awesome_image}.
  In steps \ref{step:cpH}-\ref{step:addprojector}, after having selected
  the next vertex $v$ of the interaction graph according to the total vertex order,
  we construct a new Hamiltonian $H_{t+1}$ from
  $H_{t}$: First, we select a $d$-dimensional ``physical'' subspace from the $D^k$ dimensional
  space at each vertex $v$. This subspace is represented by projector $P_{phy}^{(v)}$.
  Then we remove for each neighboring vertex $v'$ of $v$ the term $H_{t}^{(v,v')}$. These are
  either trivial $H_e$ terms or temporary boundary terms (see below).
  Next we compute the new parent Hamiltonian terms $H_{t+1}^{(v,v')}$ according to \cite{PVCW07}
  and add  them to $H_{t+1}$ reflecting the application of $A^{(v)}$. 
  Restricted to the ``physical" subspace $P_{phy}^{(v)}$, each
  $H^{(v,v')}$ is simply a sum of $2$-local terms over all edges $e$ from $v$ to vertices $v'$.
  Note, that parent Hamiltonian terms $H^{(v,v')}$ towards any open ``virtual index'' $v'$
  are only temporary boundary terms which are computed in exactly the same way just as those for any
  other vertex by assuming the identity as the applicable PEPS map. Since the identity is
  trivially invertible, each intermediate PEPS is also
  injective and thus the unique ground state of the intermediate Hamitonian $H_{t+1}$.
  Since the ``physical" $d$-dimensional space is just a subspace of
  the $D^k$ ``virtual" space \emph{that is in fact also implemented physically in this algorithm},
  $H^{(v)}$ is actually a sum of $2k$-local projectors.
  In order to ensure we produce a state with a single $d$-dimensional
  local space associated to each vertex $v$ in the final PEPS, we add an extra term
  $H_{phy}^{(v)}=c(\id-P_{phy}^{(v)})$ in this step. 
  This term penalizes the orthogonal complement of the chosen subspace with
  some energy $c \gg \Delta$.

  Note, that prior to the execution of step \ref{step:measurenext}, the system is in the
  ground state $\ket{\psi_t}$ of $H_t$ by construction.
  This ground state is unique by the injectivity assumption we make for each
  intermediate PEPS $\ket{\psi_t}$ prepared in each iteration.
  In order to transition to
  the ground state $\ket{\psi_{t+1}}$ of $H_{t+1}$, we run the phase estimation \cite{NC00}
  algorithm for Hamiltonian $H_{t+1}$, perform a binary measurement to
  project $\ket{\psi_t}$
  onto the zero/non-zero energy subspaces of $H_{t+1}$, and uncompute the
  phase estimation (step \ref{step:measurenext}). This step requires an
  inverse eigenvalue gap $\Delta^{-1}$ between these two subspaces that scales with $O(poly(|V|))$
  for the phase estimation to be efficient and precise enough \cite{BACS07}. We assume that such a gap
  exists for each intermediate parent Hamiltonian $H_t$ that we construct
  according to the total vertex order defined on the interaction graph.

  If the measurement results in the projection onto the zero-energy subspace of
  $H_{t+1}$ we proceed to the next iteration (step \ref{step:failedloop}).
  By Lemma \ref{lem:overlap}, this event
  occurs with probability at least $\kappa(A^{(v)})^{-2}$, where $\kappa(A^{(v)})$ is the
  condition number of PEPS projector $A^{(v)}$ associated with vertex $v$.
  Note, that the injectivity property of the PEPS assures, that each
  $\kappa(A^{(v)})$ is a positive constant.
  If the measurement projects onto the excited subspace of $H_{t+1}$, we \emph{undo}
  the measurement by measuring $H_{t}$ again (step \ref{step:measureundo}).
  If this second measurement results in a projection on the ground state,
  we have exactly undone the (unsuccessful) measurement of $H_{t+1}$,
  otherwise the system is in the excited subspace of $H_{t}$. In both cases the
  projection onto the groundstate of $H_{t+1}$ can now be attempted again,
  with success probabilities $\kappa(A^{(v)})^{-2}$ and $1-\kappa(A^{(v)})^{-2}$,
  respectively (step \ref{step:measureredo}). By Lemma \ref{lem:convergence},
  the inner loop will succeed in projecting onto the ground
  state of $H_{t+1}$ with probability at least $1-\frac{1}{2es}$ after at most
  $\kappa^2s$ attempts, with $s$ chosen as $s=\frac{|V|}{2e\varepsilon}$.
  Once all $|V|$ vertices have been covered, the outer loop terminates with
  the PEPS $\ket{\psi}$ in its output register with probability at least $1-\varepsilon$,
  as shown in Theorem \ref{thm:main}.


  {\it Bounding the transition probabilities.---}As a first step in our 
  analysis, we need a lower bound on the transition
  probability from $\ket{\psi_t}$ to $\ket{\psi_{t+1}}$. To this end
  we proof the following lemma.

  \begin{lemma} \label{lem:overlap}
  Let $\ket{\psi_t}=\frac{1}{\sqrt{Z_t}}\ket{A_t}$ be the normalized PEPS
  $\ket{A_t}$, where $\ket{A_t}$ is the unnormalized partial PEPS resulting from the
  contraction of PEPS projectors $A^{(v)}$ for all vertices $v$ processed
  in the algorithm up to step $t$ and let $Z_t=\braket{A_t}{A_t}$. Let
  $\ket{A_{t+1}}=A_{t+1} \ket{A_t}$ where $A_{t+1}$ is the PEPS projector
  of time step $t+1$. Then $|\braket{\psi_{t+1}}{\psi_t}|^2 \geq \frac{1}{\kappa(A_{t+1})^2}>0$.
  \end{lemma}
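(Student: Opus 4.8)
The plan is to rewrite the squared overlap as a ratio of moments of $A_{t+1}$ in the state $\ket{\psi_t}$, and then bound that ratio using only the extreme singular values of $A_{t+1}$. I would first invoke the reduction recorded above: after the polar decomposition $\tilde A^{(v)} = U^{(v)} A^{(v)}$ the isometric factor $U^{(v)}$ amounts to a choice of local basis, so we may take $A_{t+1}$ to be the Hermitian, positive definite operator $A^{(v)}$ (acting as the identity on all sites other than the vertex processed in step $t{+}1$), whose eigenvalues are its singular values and lie in $[\sigma_{\min}(A_{t+1}),\sigma_{\max}(A_{t+1})]$ with $\sigma_{\min}(A_{t+1})>0$ by injectivity. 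Since $\ket{A_{t+1}}=A_{t+1}\ket{A_t}$ and $A_{t+1}^\dagger=A_{t+1}$, we have $Z_{t+1}=\braket{A_{t+1}}{A_{t+1}}=\bra{A_t}A_{t+1}^2\ket{A_t}$ and $\braket{A_{t+1}}{A_t}=\bra{A_t}A_{t+1}\ket{A_t}$, the latter a positive real number. Substituting $\ket{A_t}=\sqrt{Z_t}\,\ket{\psi_t}$ then gives
\[
 |\braket{\psi_{t+1}}{\psi_t}|^2=\frac{\bigl(\bra{A_t}A_{t+1}\ket{A_t}\bigr)^2}{Z_{t+1}Z_t}=\frac{\bigl(\bra{\psi_t}A_{t+1}\ket{\psi_t}\bigr)^2}{\bra{\psi_t}A_{t+1}^2\ket{\psi_t}}.
\]

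The remaining step is to bound the right-hand side from below. From the operator inequalities $\sigma_{\min}(A_{t+1})\,\id \leq A_{t+1}$ and $A_{t+1}^2 \leq \sigma_{\max}(A_{t+1})^2\,\id$ one gets $\bra{\psi_t}A_{t+1}\ket{\psi_t}\geq \sigma_{\min}(A_{t+1})$, so the numerator is at least $\sigma_{\min}(A_{t+1})^2$, while the denominator is at most $\sigma_{\max}(A_{t+1})^2$. Hence the ratio is at least $\sigma_{\min}(A_{t+1})^2/\sigma_{\max}(A_{t+1})^2=\kappa(A_{t+1})^{-2}$, which is strictly positive because $A_{t+1}$ is invertible; this establishes the claimed bound.

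There is no genuinely hard step here; the point to be careful about is the positivity reduction, since it is what makes $A_{t+1}^\dagger A_{t+1}=A_{t+1}^2$, makes $\bra{A_t}A_{t+1}\ket{A_t}$ real and positive, and lets us regard $\ket{\psi_t}$ and $\ket{\psi_{t+1}}$ as vectors in one Hilbert space (the physical $d$-dimensional space being identified with a subspace of the $D^k$-dimensional virtual space) so that the overlap is meaningful. I would also note that a sharper constant can be obtained by applying a Kantorovich-type inequality to the spectral distribution of $A_{t+1}$ in the state $\ket{\psi_t}$, but the crude operator bounds already yield the $\kappa(A_{t+1})^{-2}$ that is used subsequently.
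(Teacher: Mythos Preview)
Your proof is correct and follows essentially the same approach as the paper: both invoke the positivity reduction $A_{t+1}=A_{t+1}^\dagger>0$, rewrite the squared overlap as a ratio of moments of $A_{t+1}$ in the state $\ket{\psi_t}$, and then bound using the extreme singular values. The only cosmetic difference is in how the two elementary operator inequalities are deployed: the paper first uses $A_{t+1}^2\le\sigma_{\max}A_{t+1}$ to obtain $|\braket{\psi_{t+1}}{\psi_t}|^2\ge\sigma_{\max}^{-2}\,Z_{t+1}/Z_t$ and then $A_{t+1}^2\ge\sigma_{\min}^2\id$ to bound $Z_{t+1}/Z_t$, whereas you bound numerator and denominator directly via $A_{t+1}\ge\sigma_{\min}\id$ and $A_{t+1}^2\le\sigma_{\max}^2\id$; your organization is slightly more transparent, and your remark about the Kantorovich inequality is a correct sharpening that the paper does not mention.
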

  \begin{proof}
  A simple calculation shows
  \begin{align}
  \braket{\psi_{t+1}}{\psi_t} & = \frac{1}{\sqrt{Z_t}} \frac{1}{\sqrt{Z_{t+1}}} \bra{A_t} A_{t+1}^\dag \ket{A_t} \\
  & \geq \frac{1}{\sqrt{Z_t}} \frac{1}{\sqrt{Z_{t+1}}}
  	\frac{\bra{A_t} A_{t+1}^\dag A_{t+1} \ket{A_t}}{\sigma_{\max}(A_{t+1})} \\
  & = \frac{1}{\sqrt{Z_t}} \frac{1}{\sqrt{Z_{t+1}}} \frac{Z_{t+1}}{\sigma_{\max}(A_{t+1})} \\
  & = \frac{1}{\sigma_{\max}(A_{t+1})} \left(\frac{Z_{t+1}}{Z_{t}}\right)^{\frac{1}{2}}
  \end{align}
  where the inequality follows from the operator inequalities $A_{t+1}\geq 0$ and 
  $\frac{A_{t+1}}{\sigma_{\max(A_{t+1})}} \leq \id$. This implies
  \begin{equation}
  \left|\braket{\psi_{t+1}}{\psi_t}\right|^2 \geq \frac{1}{\sigma_{\max}(A_{t+1})^2} \frac{Z_{t+1}}{Z_{t}}
  \label{eq:overlap}
  \end{equation}
  But
  \begin{align}
  Z_{t+1} = \bra{A_t}A_{t+1}^2\ket{A_t}
          &\geq \sigma_{\min}(A_{t+1})^2 \braket{A_t}{A_t} \\
          &= \sigma_{\min}(A_{t+1})^2 Z_t. \label{eq:Z_frac}
  \end{align}
  Thus Eq. \ref{eq:overlap} and Eq. \ref{eq:Z_frac} yield the claim
  \begin{equation}
  p=|\braket{\psi_{t+1}}{\psi_t}|^2
    \geq \left(\frac{\sigma_{\min}(A_{t+1})}{\sigma_{\max}(A_{t+1})}\right)^2
     = \frac{1}{\kappa(A_{t+1})^{2}}
  \end{equation}
  Finally, the injectivity assumption of PEPS $\ket{\psi_{t+1}}$ implies
  left invertibility of $A_{t+1}$ for each $v$, thus $\kappa(A_{t+1})$ is finite,
  therefore $p>0$.
  \end{proof}

  {\it Bounding the convergence rate.---}In this section we analyze the 
  termination probability of the loop at step
  \ref{step:failedloop}.
  \begin{lemma} \label{lem:convergence}
  Let $H_{t}, H_{t+1}$ be Hamiltonians with unique zero-energy groundstates
  $\ket{\psi_{t}}$ and $\ket{\psi_{t+1}}$, respectively. Let $s$ be a positive
  integer. If the system
  is in state $\ket{\psi_{t}}$ initially, then the measurement process
  alternatingly measuring $H_{t+1}$ and $H_{t}$ and stopping
  once $\ket{\psi_{t+1}}$ is reached, takes the system to state $\ket{\psi_{t+1}}$
  with probability at least $1-\frac{1}{2es}$ after at most $s/p$
  alternations, where  $p=|\braket{\psi_{t+1}}{\psi_t}|^2$.
  \end{lemma}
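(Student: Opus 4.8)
The plan is to reduce the measurement process to a finite-state Markov chain confined to a two-dimensional subspace, read off its transition probabilities via Born's rule, and then optimize a geometric tail bound. Assume $p>0$ (otherwise $s/p=\infty$ and, indeed, the process cannot converge). Let $\Pi_t=\ket{\psi_t}\bra{\psi_t}$ and $\Pi_{t+1}=\ket{\psi_{t+1}}\bra{\psi_{t+1}}$; by uniqueness of the zero-energy ground states these rank-one projectors are exactly what the binary measurements of $H_t$ and $H_{t+1}$ implement. If $\ket{\psi_t}$ and $\ket{\psi_{t+1}}$ are proportional then $p=1$ and the first measurement of $H_{t+1}$ already succeeds with certainty; otherwise set $\mathcal{S}=\mathrm{span}\{\ket{\psi_t},\ket{\psi_{t+1}}\}$, which is two-dimensional. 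Since $\mathcal{S}$ is spanned by the two ground states it is invariant under $\Pi_t$ and $\Pi_{t+1}$, hence also under $\id-\Pi_t$ and $\id-\Pi_{t+1}$; as the process starts in $\ket{\psi_t}\in\mathcal{S}$, every normalized state it ever visits stays in $\mathcal{S}$ -- this is the Jordan's lemma / Marriott--Watrous reduction. Inside $\mathcal{S}$ the ``non-ground'' outcome of a measurement of $H_{t+1}$ (resp.\ $H_t$) is the unique unit vector $\ket{\psi_{t+1}^{\perp}}$ (resp.\ $\ket{\psi_t^{\perp}}$) of $\mathcal{S}$ orthogonal to $\ket{\psi_{t+1}}$ (resp.\ $\ket{\psi_t}$), so only the four states $\ket{\psi_t},\ket{\psi_{t+1}},\ket{\psi_t^{\perp}},\ket{\psi_{t+1}^{\perp}}$ are ever reached.

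Next I would write out the transition probabilities, all of which come out in terms of $p$ and $1-p$: from $\ket{\psi_t}$ a measurement of $H_{t+1}$ returns $\ket{\psi_{t+1}}$ with probability $p$ and $\ket{\psi_{t+1}^{\perp}}$ with probability $1-p$; from $\ket{\psi_t^{\perp}}$ it returns $\ket{\psi_{t+1}}$ with probability $1-p$; and a measurement of $H_t$ applied to $\ket{\psi_{t+1}^{\perp}}$ returns $\ket{\psi_t}$ with probability $1-p$ and $\ket{\psi_t^{\perp}}$ with probability $p$. The key structural point is that the state produced by one failed $H_{t+1}$-measurement followed by its undo-measurement of $H_t$ is $\ket{\psi_t}$ with probability $1-p$ and $\ket{\psi_t^{\perp}}$ with probability $p$, \emph{independently} of which state the failure started from; together with the Markov property this makes the probability that the initial $H_{t+1}$-measurement fails and then $m$ further alternations (each an $H_t$-measurement followed by an $H_{t+1}$-measurement) also fail equal to $(1-p)\bigl((1-p)^2+p^2\bigr)^m=(1-p)\bigl(1-2p(1-p)\bigr)^m$.

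It then remains to bound this tail uniformly in $p$. Setting $q=2p(1-p)$ and using $1-p=q/(2p)$, $(1-q)^m\le e^{-qm}$, and $\max_{x\ge0}xe^{-x}=1/e$, one obtains $(1-p)\bigl(1-2p(1-p)\bigr)^m=\tfrac{q}{2p}(1-q)^m\le\tfrac{1}{2pm}(qm)e^{-qm}\le\tfrac{1}{2epm}$. Choosing $m=s/p$ -- the number of alternations in the statement -- bounds the failure probability by $\tfrac{1}{2es}$; equivalently the process reaches $\ket{\psi_{t+1}}$ with probability at least $1-\tfrac{1}{2es}$, as claimed. I expect the second paragraph to be the real obstacle: one has to check carefully both that the process genuinely stays inside $\mathcal{S}$ and that the post-undo distribution does not depend on the pre-failure state, since it is exactly this independence that produces the $(1-p)$ prefactor and hence a bound that is uniform in $p$ -- a cruder estimate such as $\max(p,1-p)^m$ gives only $e^{-s}$ when $p$ is bounded away from $1$ and becomes vacuous as $p\to1$.
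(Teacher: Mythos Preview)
Your proof is correct and follows essentially the same route as the paper: reduce to the two-dimensional Jordan/Marriott--Watrous subspace spanned by the two ground states, obtain the failure probability $(1-p)\bigl((1-p)^2+p^2\bigr)^m$, and then bound it via $(1-q)^m\le e^{-qm}$ and the optimization $\max_{x\ge 0} x e^{-x}=1/e$. In fact you spell out the final ``straightforward calculus'' step that the paper leaves implicit, arriving at the same $1/(2es)$ bound for $m=s/p$.
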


  \begin{proof}
  Let $P,Q$ be the ground state projectors of $H_{t}$ and $H_{t+1}$,
  respectively, and let \mbox{$P^\perp = \id - P$}, \mbox{$Q^\perp = \id - Q$}.
  By Jordan's Lemma, there exists an orthonormal
  basis in which the Hilbert space decomposes into (1) two-dimensional
  subspaces $S_i$ invariant under both, $P$ and $Q$, and (2) one-dimensional
  subspaces $T_j$ on which $PQ$ is either an identity- or zero-projector \cite{NWZ09}.

  Since we know that $\ket{\psi_{t}}$ and $\ket{\psi_{t+1}}$ are the
  unique 1-eigenstates of $P$ and $Q$ with overlap $\sqrt{p}$, exactly
  one $S_i$ is relevant to our analysis. This two-dimensional subspace is
  spanned by both,
  $\ket{\psi_{t}}$ and some $\ket{\psi_{t}^\perp}$, as well as
  by $\ket{\psi_{t+1}}$ and some $\ket{\psi_{t+1}^\perp}$. Among these
  four vectors, we have got the following relationships \cite{MW05}:
  \begin{align}
  \ket{\psi_{t}} &= -\sqrt{p}\ket{\psi_{t+1}} + \sqrt{1-p}\ket{\psi_{t+1}^\perp}\\
  \ket{\psi_{t}^\perp} &= \sqrt{1-p}\ket{\psi_{t+1}} + \sqrt{p}\ket{\psi_{t+1}^\perp}\\
  \ket{\psi_{t+1}} &= -\sqrt{p}\ket{\psi_{t}} + \sqrt{1-p}\ket{\psi_{t}^\perp} \\
  \ket{\psi_{t+1}^\perp} &= \sqrt{1-p}\ket{\psi_{t}} + \sqrt{p}\ket{\psi_{t}^\perp}
  \end{align}
  Considering these symmetrical relations, we see that alternating measurements
  of $H_{t}$ and $H_{t+1}$ generate a Markov process among these four
  states. Since the process terminates whenever it hits $\ket{\psi_{t+1}}$,
  the only histories which can keep the process from terminating are those with
  an initial transition $\ket{\psi_t}\rightarrow \ket{\psi_{t+1}^\perp}$
  and which then keep repeating either one of the following two pairs of transitions
  \begin{align}
  \ket{\psi_{t+1}^\perp} \rightarrow \ket{\psi_{t}} &\rightarrow \ket{\psi_{t+1}^\perp}\\
  \ket{\psi_{t+1}^\perp} \rightarrow \ket{\psi_{t}^\perp} &\rightarrow \ket{\psi_{t+1}^\perp},
  \end{align}
  which occur with probabilities $(1-p)^2$ and $p^2$, respectively.
  Thus the process terminates after at most $2m+1$ measurements with probability
  \begin{align}
  p_{\text{term}}(p,m) &= 1-(1-p)(p^2+(1-p)^2)^m.
  \end{align}
  To lower-bound this probability we
  upper-bound $p_{\text{fail}}(p,m)=1-p_{\text{term}}(p,m)$ as
  \mbox{$p_{\text{fail}}(p,m)
  	  \leq (1-p)exp(-2mp(1-p))$}
  which follows from $(1-q)^m \leq e^{-qm}$, for $0\leq q\leq 1$
  and $m\geq 0$.
Finally we choose $m$ as a multiple of $\frac{1}{p}$ and find
$p_{\text{fail}}(p,s/p) \leq \frac{1}{2es}$, which can be seen
by straightforward calculus.
\end{proof}
\begin{proof}[Proof of Theorem \ref{thm:main}\label{subsec:mainproof}]
We complete the proof of Theorem \ref{thm:main} by
using Lemma \ref{lem:convergence} for bounding the failure
probability $p_{\text{fail}}$ of the inner loop to derive a lower
bound on the success probability of the outer loop over all vertices in $V$.
That is, we have to show that
$(1-p_{\text{fail}})^{|V|} \geq 1-\varepsilon$.
Since
$(1-p_{\text{fail}})^{|V|} \geq 1-{|V|} p_{\text{fail}}$
by truncating higher-order terms from the binomial series and assuming $|V|>1$
it suffices to show
${|V|} p_{\text{fail}} \leq \varepsilon$.
Using Lemma \ref{lem:convergence}, we find the first inequality of
${|V|} p_{\text{fail}} \leq \frac{{|V|}}{2es} \leq \varepsilon$,
while the second inequality is satisfied by choosing
$s \geq \frac{{|V|}}{2e\varepsilon}$.
Thus, for the algorithm to succeed with at least probability $1-\varepsilon$ we have to
choose $m \geq \frac{s}{p} \geq \frac{{|V|}}{2pe\varepsilon}$. Since we know
from Lemma \ref{lem:overlap} that $p \geq \frac{1}{\kappa^2}$, choosing
$m \geq \frac{\kappa^2 {|V|}}{2e\varepsilon} \geq \frac{{|V|}}{2pe\varepsilon}$ suffices.
Thus the inner loop performs at most
\mbox{$2m+1 \leq \frac{\kappa^2 {|V|}}{e\varepsilon}$} measurements.
The outer loop iterates over $|V|$ vertices, thus the total number of measurements is less than
$\frac{\kappa^2 {|V|}^2}{e\varepsilon}+|V|$.
Bookkeeping of the active Hamiltonian terms in the outer loop requires a total time of $O(|V|k)$
using simple arrays as data structures, and $O(|V|kd^6)$ to compute all parent Hamiltonian terms,
both of which are dominated by the $O(|V|^2)$ time of the inner loop for small $d$.
Finally, since each phase estimation
step requires $\tilde{O}(|E|^{2}/\Delta)$ \cite{BACS07,NWZ09,NC00}, where $\tilde{O}(\cdot)$ suppresses more slowly growing factors such as $exp(\sqrt{\ln(|E|/\Delta)})$ \cite{HHL09}, we find a total runtime of
$\tilde{O}( \frac{|V|^2 |E|^{2}\kappa^2}{ \varepsilon \Delta} +|V|kd^6)$.
This completes the proof of Theorem \ref{thm:main}.
\end{proof}

{\it Conclusion.---}In this Letter we have shown how to construct quantum states described by injective
PEPS in polynomial time by first reducing the problem to the generation of a sequence of
unique ground states of certain Hamiltonians and then preparing that sequence. 
In future work we will focus on extending the class of preparable PEPS and possible
performance improvements following from the results of \cite{BKS10,BS10,SB11}.

This work has been supported by Austrian SFB project FoQuS F4014.



\begin{thebibliography}{15}%
\makeatletter
\providecommand \@ifxundefined [1]{%
 \@ifx{#1\undefined}
}%
\providecommand \@ifnum [1]{%
 \ifnum #1\expandafter \@firstoftwo
 \else \expandafter \@secondoftwo
 \fi
}%
\providecommand \@ifx [1]{%
 \ifx #1\expandafter \@firstoftwo
 \else \expandafter \@secondoftwo
 \fi
}%
\providecommand \natexlab [1]{#1}%
\providecommand \enquote  [1]{``#1''}%
\providecommand \bibnamefont  [1]{#1}%
\providecommand \bibfnamefont [1]{#1}%
\providecommand \citenamefont [1]{#1}%
\providecommand \href@noop [0]{\@secondoftwo}%
\providecommand \href [0]{\begingroup \@sanitize@url \@href}%
\providecommand \@href[1]{\@@startlink{#1}\@@href}%
\providecommand \@@href[1]{\endgroup#1\@@endlink}%
\providecommand \@sanitize@url [0]{\catcode `\\12\catcode `\$12\catcode
  `\&12\catcode `\#12\catcode `\^12\catcode `\_12\catcode `\%12\relax}%
\providecommand \@@startlink[1]{}%
\providecommand \@@endlink[0]{}%
\providecommand \url  [0]{\begingroup\@sanitize@url \@url }%
\providecommand \@url [1]{\endgroup\@href {#1}{\urlprefix }}%
\providecommand \urlprefix  [0]{URL }%
\providecommand \Eprint [0]{\href }%
\providecommand \doibase [0]{http://dx.doi.org/}%
\providecommand \selectlanguage [0]{\@gobble}%
\providecommand \bibinfo  [0]{\@secondoftwo}%
\providecommand \bibfield  [0]{\@secondoftwo}%
\providecommand \translation [1]{[#1]}%
\providecommand \BibitemOpen [0]{}%
\providecommand \bibitemStop [0]{}%
\providecommand \bibitemNoStop [0]{.\EOS\space}%
\providecommand \EOS [0]{\spacefactor3000\relax}%
\providecommand \BibitemShut  [1]{\csname bibitem#1\endcsname}%
\let\auto@bib@innerbib\@empty
\bibitem [{\citenamefont {Verstraete}\ \emph {et~al.}(2006)\citenamefont
  {Verstraete}, \citenamefont {Wolf}, \citenamefont {Perez-Garcia},\ and\
  \citenamefont {Cirac}}]{VWPC06}%
  \BibitemOpen
  \bibfield  {author} {\bibinfo {author} {\bibfnamefont {F.}~\bibnamefont
  {Verstraete}}, \bibinfo {author} {\bibfnamefont {M.}~\bibnamefont {Wolf}},
  \bibinfo {author} {\bibfnamefont {D.}~\bibnamefont {Perez-Garcia}}, \ and\
  \bibinfo {author} {\bibfnamefont {J.}~\bibnamefont {Cirac}},\ }\href@noop {}
  {\bibfield  {journal} {\bibinfo  {journal} {Physical review letters}\
  }\textbf {\bibinfo {volume} {96}},\ \bibinfo {pages} {220601} (\bibinfo
  {year} {2006})}\BibitemShut {NoStop}%
\bibitem [{\citenamefont {Verstraete}\ and\ \citenamefont
  {Cirac}(2004)}]{VC04}%
  \BibitemOpen
  \bibfield  {author} {\bibinfo {author} {\bibfnamefont {F.}~\bibnamefont
  {Verstraete}}\ and\ \bibinfo {author} {\bibfnamefont {J.I.}~\bibnamefont
  {Cirac}},\ }\href@noop {} {\bibfield  {journal} {\bibinfo  {journal}
  {Physical Review A}\ }\textbf {\bibinfo {volume} {70}},\ \bibinfo {pages}
  {060302(R)} (\bibinfo {year} {2004})},\ \Eprint
  {http://arxiv.org/abs/arXiv:quant-ph/0311130} {arXiv:quant-ph/0311130}
  \BibitemShut {NoStop}%
\bibitem [{\citenamefont {Rommer}\ and\ \citenamefont
  {{\"O}stlund}(1997)}]{RO97}%
  \BibitemOpen
  \bibfield  {author} {\bibinfo {author} {\bibfnamefont {S.}~\bibnamefont
  {Rommer}}\ and\ \bibinfo {author} {\bibfnamefont {S.}~\bibnamefont
  {{\"O}stlund}},\ }\href@noop {} {\bibfield  {journal} {\bibinfo  {journal}
  {Physical Review B}\ }\textbf {\bibinfo {volume} {55}},\ \bibinfo {pages}
  {2164} (\bibinfo {year} {1997})}\BibitemShut {NoStop}%
\bibitem [{\citenamefont {Verstraete}\ and\ \citenamefont
  {Cirac}(2006)}]{VC06}%
  \BibitemOpen
  \bibfield  {author} {\bibinfo {author} {\bibfnamefont {F.}~\bibnamefont
  {Verstraete}}\ and\ \bibinfo {author} {\bibfnamefont {J.I.}~\bibnamefont
  {Cirac}},\ }\href@noop {} {\bibfield  {journal} {\bibinfo  {journal}
  {Physical Review B}\ }\textbf {\bibinfo {volume} {73}},\ \bibinfo {pages}
  {094423} (\bibinfo {year} {2006})}\BibitemShut {NoStop}%
\bibitem [{\citenamefont {Hastings}(2007)}]{hastings07}%
  \BibitemOpen
  \bibfield  {author} {\bibinfo {author} {\bibfnamefont {M.}~\bibnamefont
  {Hastings}},\ }\href@noop {} {\bibfield  {journal} {\bibinfo  {journal}
  {Journal of Statistical Mechanics: Theory and Experiment}\ }\textbf {\bibinfo
  {volume} {2007}},\ \bibinfo {pages} {P08024} (\bibinfo {year}
  {2007})}\BibitemShut {NoStop}%
\bibitem [{\citenamefont {{Perez-Garcia}}\ \emph {et~al.}(2008)\citenamefont
  {{Perez-Garcia}}, \citenamefont {{Verstraete}}, \citenamefont {{Cirac}},\
  and\ \citenamefont {{Wolf}}}]{PVCW07}%
  \BibitemOpen
  \bibfield  {author} {\bibinfo {author} {\bibfnamefont {D.}~\bibnamefont
  {{Perez-Garcia}}}, \bibinfo {author} {\bibfnamefont {F.}~\bibnamefont
  {{Verstraete}}}, \bibinfo {author} {\bibfnamefont {J.~I.}\ \bibnamefont
  {{Cirac}}}, \ and\ \bibinfo {author} {\bibfnamefont {M.~M.}\ \bibnamefont
  {{Wolf}}},\ }\href@noop {} {\bibfield  {journal} {\bibinfo  {journal} {Quant.
  Inf. Comp}\ }\textbf {\bibinfo {volume} {8}},\ \bibinfo {pages} {0650}
  (\bibinfo {year} {2008})},\ \Eprint {http://arxiv.org/abs/arXiv:0707.2260}
  {arXiv:0707.2260} \BibitemShut {NoStop}%
\bibitem [{\citenamefont {Sch{\"o}n}\ \emph {et~al.}(2005)\citenamefont
  {Sch{\"o}n}, \citenamefont {Solano}, \citenamefont {Verstraete},
  \citenamefont {Cirac},\ and\ \citenamefont {Wolf}}]{SSVCW05}%
  \BibitemOpen
  \bibfield  {author} {\bibinfo {author} {\bibfnamefont {C.}~\bibnamefont
  {Sch{\"o}n}}, \bibinfo {author} {\bibfnamefont {E.}~\bibnamefont {Solano}},
  \bibinfo {author} {\bibfnamefont {F.}~\bibnamefont {Verstraete}}, \bibinfo
  {author} {\bibfnamefont {J.I.}~\bibnamefont {Cirac}}, \ and\ \bibinfo {author}
  {\bibfnamefont {M.}~\bibnamefont {Wolf}},\ }\href@noop {} {\bibfield
  {journal} {\bibinfo  {journal} {Physical review letters}\ }\textbf {\bibinfo
  {volume} {95}},\ \bibinfo {pages} {110503} (\bibinfo {year}
  {2005})}\BibitemShut {NoStop}%
\bibitem [{\citenamefont {Schuch}\ \emph {et~al.}(2007)\citenamefont {Schuch},
  \citenamefont {Wolf}, \citenamefont {Verstraete},\ and\ \citenamefont
  {Cirac}}]{SWVC07}%
  \BibitemOpen
  \bibfield  {author} {\bibinfo {author} {\bibfnamefont {N.}~\bibnamefont
  {Schuch}}, \bibinfo {author} {\bibfnamefont {M.}~\bibnamefont {Wolf}},
  \bibinfo {author} {\bibfnamefont {F.}~\bibnamefont {Verstraete}}, \ and\
  \bibinfo {author} {\bibfnamefont {J.I.}~\bibnamefont {Cirac}},\ }\href@noop {}
  {\bibfield  {journal} {\bibinfo  {journal} {Physical review letters}\
  }\textbf {\bibinfo {volume} {98}},\ \bibinfo {pages} {140506} (\bibinfo
  {year} {2007})}\BibitemShut {NoStop}%
\bibitem [{\citenamefont {Marriott}\ and\ \citenamefont
  {Watrous}(2005)}]{MW05}%
  \BibitemOpen
  \bibfield  {author} {\bibinfo {author} {\bibfnamefont {C.}~\bibnamefont
  {Marriott}}\ and\ \bibinfo {author} {\bibfnamefont {J.}~\bibnamefont
  {Watrous}},\ }\href@noop {} {\bibfield  {journal} {\bibinfo  {journal}
  {Computational Complexity}\ }\textbf {\bibinfo {volume} {14}},\ \bibinfo
  {pages} {122} (\bibinfo {year} {2005})}\BibitemShut {NoStop}%
\bibitem [{\citenamefont {Nagaj}\ \emph {et~al.}(2009)\citenamefont {Nagaj},
  \citenamefont {Wocjan},\ and\ \citenamefont {Zhang}}]{NWZ09}%
  \BibitemOpen
  \bibfield  {author} {\bibinfo {author} {\bibfnamefont {D.}~\bibnamefont
  {Nagaj}}, \bibinfo {author} {\bibfnamefont {P.}~\bibnamefont {Wocjan}}, \
  and\ \bibinfo {author} {\bibfnamefont {Y.}~\bibnamefont {Zhang}},\
  }\href@noop {} {\bibfield  {journal} {\bibinfo  {journal} {QIC}\ }\textbf
  {\bibinfo {volume} {9}},\ \bibinfo {pages} {1053} (\bibinfo {year} {2009})},\
  \Eprint {http://arxiv.org/abs/arXiv:0904.1549} {arXiv:0904.1549} \BibitemShut
  {NoStop}%
\bibitem [{\citenamefont {Jordan}(1875)}]{jordan1875}%
  \BibitemOpen
  \bibfield  {author} {\bibinfo {author} {\bibfnamefont {C.}~\bibnamefont
  {Jordan}},\ }\href@noop {} {\bibfield  {journal} {\bibinfo  {journal} {Bull.
  Soc. Math. France}\ }\textbf {\bibinfo {volume} {3}},\ \bibinfo {pages} {103}
  (\bibinfo {year} {1875})}\BibitemShut {NoStop}%
\bibitem [{\citenamefont {Verstraete}\ \emph {et~al.}(2009)\citenamefont
  {Verstraete}, \citenamefont {Wolf},\ and\ \citenamefont {Cirac}}]{VWC09}%
  \BibitemOpen
  \bibfield  {author} {\bibinfo {author} {\bibfnamefont {F.}~\bibnamefont
  {Verstraete}}, \bibinfo {author} {\bibfnamefont {M.}~\bibnamefont {Wolf}}, \
  and\ \bibinfo {author} {\bibfnamefont {J.}~\bibnamefont {Cirac}},\
  }\href@noop {} {\bibfield  {journal} {\bibinfo  {journal} {Nature Physics}\
  }\textbf {\bibinfo {volume} {5}},\ \bibinfo {pages} {633} (\bibinfo {year}
  {2009})},\ \Eprint {http://arxiv.org/abs/arXiv:0803.1447} {arXiv:0803.1447}
  \BibitemShut {NoStop}%
\bibitem [{\citenamefont {Temme}\ \emph {et~al.}(2011)\citenamefont {Temme},
  \citenamefont {Osborne}, \citenamefont {Vollbrecht}, \citenamefont {Poulin},\
  and\ \citenamefont {Verstraete}}]{TOVPV10}%
  \BibitemOpen
  \bibfield  {author} {\bibinfo {author} {\bibfnamefont {K.}~\bibnamefont
  {Temme}}, \bibinfo {author} {\bibfnamefont {T.}~\bibnamefont {Osborne}},
  \bibinfo {author} {\bibfnamefont {K.}~\bibnamefont {Vollbrecht}}, \bibinfo
  {author} {\bibfnamefont {D.}~\bibnamefont {Poulin}}, \ and\ \bibinfo {author}
  {\bibfnamefont {F.}~\bibnamefont {Verstraete}},\ }\href@noop {} {\bibfield
  {journal} {\bibinfo  {journal} {Nature}\ }\textbf {\bibinfo {volume} {471}},\
  \bibinfo {pages} {87} (\bibinfo {year} {2011})}\BibitemShut {NoStop}%
\bibitem [{\citenamefont {Nielsen}\ and\ \citenamefont {Chuang}(2000)}]{NC00}%
  \BibitemOpen
  \bibfield  {author} {\bibinfo {author} {\bibfnamefont {M.}~\bibnamefont
  {Nielsen}}\ and\ \bibinfo {author} {\bibfnamefont {I.}~\bibnamefont
  {Chuang}},\ }\href@noop {} {\emph {\bibinfo {title} {{Quantum Computation and
  Quantum Information}}}}\ (\bibinfo  {publisher} {Cambridge University Press,
  Cambridge},\ \bibinfo {year} {2000})\BibitemShut {NoStop}%
\bibitem [{\citenamefont {Berry}\ \emph {et~al.}(2007)\citenamefont {Berry},
  \citenamefont {Ahokas}, \citenamefont {Cleve},\ and\ \citenamefont
  {Sanders}}]{BACS07}%
  \BibitemOpen
  \bibfield  {author} {\bibinfo {author} {\bibfnamefont {D.}~\bibnamefont
  {Berry}}, \bibinfo {author} {\bibfnamefont {G.}~\bibnamefont {Ahokas}},
  \bibinfo {author} {\bibfnamefont {R.}~\bibnamefont {Cleve}}, \ and\ \bibinfo
  {author} {\bibfnamefont {B.}~\bibnamefont {Sanders}},\ }\href
  {http://dx.doi.org/10.1007/s00220-006-0150-x} {\bibfield  {journal} {\bibinfo
   {journal} {Communications in Mathematical Physics}\ }\textbf {\bibinfo
  {volume} {270}},\ \bibinfo {pages} {359} (\bibinfo {year} {2007})},\ \bibinfo
  {note} {10.1007/s00220-006-0150-x}\BibitemShut {NoStop}%
\bibitem [{\citenamefont {Harrow}\ \emph {et~al.}(2009)\citenamefont {Harrow},
  \citenamefont {Hassidim},\ and\ \citenamefont {Lloyd}}]{HHL09}%
  \BibitemOpen
  \bibfield  {author} {\bibinfo {author} {\bibfnamefont {A.}~\bibnamefont
  {Harrow}}, \bibinfo {author} {\bibfnamefont {A.}~\bibnamefont {Hassidim}}, \
  and\ \bibinfo {author} {\bibfnamefont {S.}~\bibnamefont {Lloyd}},\
  }\href@noop {} {\bibfield  {journal} {\bibinfo  {journal} {Physical review
  letters}\ }\textbf {\bibinfo {volume} {103}},\ \bibinfo {pages} {150502}
  (\bibinfo {year} {2009})}\BibitemShut {NoStop}%
\bibitem [{\citenamefont {Aharonov}\ and\ \citenamefont {Ta-Shma}(2007)}]{ATS07}%
  \BibitemOpen
  \bibfield  {author} {\bibinfo {author} {\bibfnamefont {D.}~\bibnamefont
  {Aharonov}}\ and\ \bibinfo {author} {\bibfnamefont {A.}~\bibnamefont
  {Ta-Shma}},\ }\href@noop {} {\bibfield  {journal} {\bibinfo  {journal} {SIAM
  Journal on Computing}\ }\textbf {\bibinfo {volume} {37}},\ \bibinfo {pages}
  {47} (\bibinfo {year} {2007})}\BibitemShut {NoStop}%
\bibitem [{\citenamefont {Boixo}\ \emph {et~al.}(2010)\citenamefont {Boixo},
  \citenamefont {Knill},\ and\ \citenamefont {Somma}}]{BKS10}%
  \BibitemOpen
  \bibfield  {author} {\bibinfo {author} {\bibfnamefont {S.}~\bibnamefont
  {Boixo}}, \bibinfo {author} {\bibfnamefont {E.}~\bibnamefont {Knill}}, \ and\
  \bibinfo {author} {\bibfnamefont {R.}~\bibnamefont {Somma}},\ }\href@noop {}
  {\bibfield  {journal} {\bibinfo  {journal} {Arxiv preprint arXiv:1005.3034}\
  } (\bibinfo {year} {2010})}\BibitemShut {NoStop}%
\bibitem [{\citenamefont {Boixo}\ and\ \citenamefont {Somma}(2010)}]{BS10}%
  \BibitemOpen
  \bibfield  {author} {\bibinfo {author} {\bibfnamefont {S.}~\bibnamefont
  {Boixo}}\ and\ \bibinfo {author} {\bibfnamefont {R.D.}~\bibnamefont {Somma}},\
  }\href@noop {} {\bibfield  {journal} {\bibinfo  {journal} {Physical Review
  A}\ }\textbf {\bibinfo {volume} {81}},\ \bibinfo {pages} {032308} (\bibinfo
  {year} {2010})}\BibitemShut {NoStop}%
\bibitem [{\citenamefont {Somma}\ and\ \citenamefont {Boixo}(2011)}]{SB11}%
  \BibitemOpen
  \bibfield  {author} {\bibinfo {author} {\bibfnamefont {R.}~\bibnamefont
  {Somma}}\ and\ \bibinfo {author} {\bibfnamefont {S.}~\bibnamefont {Boixo}},\
  }\href@noop {} {\bibfield  {journal} {\bibinfo  {journal} {Arxiv preprint
  arXiv:1110.2494}\ } (\bibinfo {year} {2011})}\BibitemShut {NoStop}%
\bibitem [{\citenamefont {Boixo}(2011)}]{BoixoEigen}%
  \BibitemOpen
  \bibfield  {author} {\bibinfo {author} {\bibfnamefont {S.}~\bibnamefont
  {Boixo}},\ }\href@noop {} {}\bibinfo {howpublished} {personal communication}
  (\bibinfo {year} {2011})\BibitemShut {NoStop}%
\end{thebibliography}
%

\end{document}